\title{Efficient Computation of Graph Overlaps\\ for Rule Composition: Theory and Z3 Prototyping}
\author{
Nicolas Behr\thanks{Corresponding author; thanks to the Bettencourt Schueller Foundation long term partnership, this work was partly supported by the CRI Research Fellowship to Nicolas Behr.}
\institute{Center for Research and Interdisciplinarity\\
Universit\'{e} de Paris, INSERM U1284\\
8-10 Rue Charles V, 75004 Paris, France}
\email{nicolas.behr@cri-paris.org}
\and
Reiko Heckel
\institute{School of Informatics\\ 
University of Leicester\\
LE1 7RH Leicester, UK}
\email{rh122@leicester.ac.uk}
\and
Maryam Ghaffari Saadat 
\institute{School of Informatics\\ 
University of Leicester\\
LE1 7RH Leicester, UK}
\email{mgs17@leicester.ac.uk}
}
\theoremstyle{plain}
\newtheorem{corollary}{Corollary}
\newtheorem{theorem}{Theorem}
\theoremstyle{definition}
\newtheorem{definition}{Definition}
\newtheorem{example}{Example}
\newtheorem{remark}{Remark}
\newtheorem{assumption}{Assumption}{\bfseries}{\itshape}
\def\bstr{b}
\def\bfstr{bf}
\def\cstr{c}
\def\fstr{f}
\def\strLst{A,B,C,D,d,E,F,G,H,I,J,K,L,M,N,O,P,Q,R,S,T,U,V,W,X,Y,Z}
\newcommand{\MkB}[1]{\expandafter\def\csname\bstr#1\endcsname{\mathbb{#1}}}
\strLst\do{%
    \expandafter\MkB \i     }
\newcommand{\MkBF}[1]{\expandafter\def\csname\bfstr#1\endcsname{\mathbf{#1}}}
\strLst\do{%
    \expandafter\MkBF \i     }
\newcommand{\MkCal}[1]{\expandafter\def\csname\cstr#1\endcsname{\mathcal{#1}}}
\strLst\do{%
    \expandafter\MkCal \i     }
\newcommand{\MkFrak}[1]{\expandafter\def\csname\fstr#1\endcsname{\mathfrak{#1}}}
\strLst\do{%
    \expandafter\MkFrak \i     }
\newcommand{\Lin}[1]{\mathop{\mathsf{Lin}}(#1)}
\newcommand{\LinEq}[1]{\overline{\mathsf{Lin}}(#1)_{\sim}}
\newcommand{\LinAc}[1]{\overline{\mathsf{Lin}}(#1)}
\newcommand{\cond}[1]{\mathsf{cond}(#1)}
\newcommand{\ac}[1]{\mathsf{#1}} 
\newcommand{\Shift}{\mathsf{Shift}}
\newcommand{\Trans}{\mathsf{Trans}}
\newcommand{\OMatch}[2]{\mathsf{M}^{{\text{\tiny sq}}}_{#1}(#2)}
\newcommand{\RMatch}[2]{\mathsf{M}^{{\text{\tiny sq}}}_{#1}(#2)}
\newcommand{\pB}[1]{\mathsf{PB}(#1)}
\newcommand{\pO}[1]{\mathsf{PO}(#1)}
\newcommand{\cSq}[1]{\square_{#1}}
\newcommand{\mono}[1]{\mathsf{mono}(#1)}
\newcommand{\iso}[1]{\mathsf{iso}(#1)}
\newcommand{\obj}[1]{\mathsf{obj}(#1)}
\newcommand{\mIO}{\mathop{\varnothing}}
\newcommand{\sqComp}[3]{#1 \stackrel{#2}{\sphericalangle} #3}
\newcommand{\inputtikz}[1]{%
	\vcenter{\hbox{\includegraphics{diagrams/#1.pdf}}}%
}
\newcommand{\inputtikzB}[2]{%
 \raisebox{#1cm}{\includegraphics{diagrams/#2.pdf}}%
}
\newcommand\pythoninline[1]{\mintinline{python}{#1}}
\colorlet{h1color}{blue!70!black} 
\colorlet{h2color}{orange!90!black} 
\colorlet{h3color}{blue!40!white} 
\colorlet{h4color}{green!40!black} 
\newcolumntype{L}{>{$}l<{$}} 
\begin{document}
\maketitle              
\begin{abstract}
Graph transformation theory relies upon the composition of rules to express  the  effects of sequences of rules. In practice, graphs are often subject to  constraints, ruling out many candidates for composed rules. Focusing on the case of sesqui-pushout (SqPO) semantics, we develop a number of alternative strategies for computing compositions, each theoretically and with an implementation via the Python API of the Z3 theorem prover. The strategies comprise a straightforward generate-and-test strategy based on forbidden graph patterns, a variant with a more implicit logical encoding of the negative constraints, and a modular strategy, where the patterns are decomposed as forbidden relation patterns. For a toy model of polymer formation in organic chemistry, we compare the performance of the three strategies in terms of execution times and memory consumption. 
\end{abstract}

\section{Introduction}

When applying graph transformation rules sequentially or in parallel, we are often interested in capturing their combined preconditions and effects in a single composed rule, e.g. to compare the effects of derivations during confluence analysis~\cite{DBLP:journals/fuin/EhrigGHLO12}, to show the preservation of behaviour under different composition operations on  rules~\cite{DBLP:journals/entcs/LevendovszkyPE07,DBLP:conf/gg/RangelLKEB08}, or to define morphisms between systems that map individual to composed rules~\cite{DBLP:journals/mscs/HeckelCEL96}. %
Usually, a notion of rule composition, such as \emph{concurrent productions} in the algebraic approaches, relies on a specification of the relation or overlap of the given rules followed by a construction of the combined rule. A particular application scenario for such types of computations are stochastic graph rewrite systems (SRS)~\cite{bp2018,nbSqPO2019,bdg2019,bp2020}, a categorical notion of stochastic rewrite systems such as \texttt{Kappa}~\cite{danos2004formal,DanosFFHK08,Boutillier:2018aa}, using an algebraic structure on rewrite steps to support the analysis of biological systems. Instead of stochastic simulation or model checking with their well-known scalability issues, SRS aim at the derivation of a system of evolution equations which, for a set of graph patterns as observables, predict their number of occurrences over time. %
For example in models of biochemistry~\cite{danos2004computational,Harmer2010,danos2012graphs}, a pattern could represent a type of molecule, such that the number of occurrences of the pattern in a given graph that models the number of molecules of this type in a given cell. %
In the semantics of SRS based upon rule algebras~\cite{bp2018,nbSqPO2019,bdg2019,bp2020}, the evolution equations for such average expected pattern counts are derivable via so-called \emph{commutators}, which are computed by counting the distinct sequential compositions of rules representing \emph{transitions} and rules representing \emph{pattern observables}. %

In many such application scenarios graphs are  subject to structural constraints. This means that when composing rules, we have to ensure that any resulting composed rule respects these constraints, in the sense that an application of the composed rule reflects a sequential application of two given rules with a legal intermediate graph.
This causes the algorithmic challenge of avoiding where possible the construction of rule compositions forbidden by the constraints. %
Since a composition of two rules is defined based on a relation between the first rule's right-hand side and the second rule's left-hand side, we have to determine the set of all such relations that lead to composed rules respecting the structural constraints. 

In this paper, we develop a number of alternative strategies for addressing this  challenge, both theoretically and using the Z3 theorem prover with its Python interface. Specifically we tackle the computation of rule compositions in the Sesqui-Pushout (SqPO) approach~\cite{Corradini2006}, which provides the right level of generality for applications in biochemistry.

We first consider a \emph{direct} or \emph{``generate-and-test''} strategy, whereby in a ``generation''  phase all monic partial overlaps without taking into account the structural constraints are determined, and where in an ensuing ``testing'' phase each candidate overlap is scrutinised for whether or not its pushout object admits the embedding of any of the \emph{forbidden graph patterns} used to express the structural constraints.

As an alternative, a more modular strategy is based on a decomposition of the aforementioned forbidden graph patterns as  \emph{forbidden relation patterns}, which may then be utilised to design a search algorithm for monic partial overlaps that combines the search and the constraint-checking into a single operation, and thus in principle permits to avoid constructing a potentially enormous number of candidate overlaps not respecting the structural constraints. %
Mathematically, this second option is based upon an original contribution of this paper in terms of the theory of constraints: %
while the verification of a negative constraint is traditionally formulated in terms of a non-embedding condition of a \emph{forbidden pattern} into the target graph,  for the special case that the target graph is the pushout of a monic span one may equivalently reformulate this verification task as a non-embedding condition of a \emph{forbidden relation} into the span. %
We explicitly demonstrate the equivalence of the two approaches, i.e.\ given a graph pattern $P$ and all its relational decompositions $S$, for any composed graph $G$ over a span $s$ there exists an embedding of $P$ iff there exists a span in $S$ with a triple of compatible embeddings into $s$. 

Our third strategy for determining constraint-preserving rule overlaps is a variant of the ``generate-and-test'' strategy that is based upon a more \emph{implicit} encoding of the condition that the pushout of a given candidate rule overlap should respect the structural constraints. Approaches to the implementation of analysis techniques on graph transformation systems can be divided into \emph{native}, \emph{translation-based}, and \emph{hybrid}~\cite{DBLP:journals/corr/abs-1912-09607}. %
We follow a hybrid approach where all three strategies are implemented in Python using the Z3 SMT solver and theorem prover~\cite{deMoura2008} for computing decompositions and embeddings. %
Our approach was inspired by an active research area in graph rewriting that features a  number of other implementations of analysis techniques, e.g., for termination, confluence, or reachability as discussed in  \cite{DBLP:journals/corr/abs-1912-09607}. %
At the core of our implementation, we precisely mirror the category-theoretical structures encoding graphs, morphisms, overlaps, compositions of graphs and constraints in terms of a class architecture in Python and utilising the Z3 Python API.  %
The declarative nature of SMT solving, where models are only determined up to isomorphism, is a good fit for the categorical theory of graph rewriting, allowing a very direct translation. %
The aim of this implementation is to evaluate the relative performance in terms of time and space requirements of our three strategies. %
As a case study, we consider an example of SqPO rewriting of rigid multigraphs representing polymers in organic chemistry. %
A multigraph (which, in general, allows parallel edges) is rigid if it does not contain edges of the same type in parallel, starting or ending in the same node, or as parallel loops (see definition of patterns in Example~\ref{ex:rGraph}). %
Based on a simple set of rules for creating and deleting edges we can model polymer formation, such that the evolution equations derivable via the commutators computed could predict the number of polymers of a particular length and shape at any time of the reaction. %
The particular contribution of this paper is in developing the theory and experimental implementation of rule compositions with graph constraints following the three alternative strategies and a preliminary comparative evaluation of their scalability.

\section{Rule Compositions for Conditional SqPO Rewriting}

We provide some background on the recent extension of \emph{Sesqui-Pushout (SqPO) rewriting}~\cite{Corradini2006} to the setting of rules with application conditions~\cite{behrRaSiR} and describe the additional assumptions on the underlying category to admit a rule algebra construction~\cite{nbSqPO2019,bk2020a}. We refer the readers to~\cite{bk2020a} for the extended discussion of the mathematical concepts.

\subsection{Categorical Setting}

\begin{assumption}[\cite{behrRaSiR}]\label{as:main}
    We assume that $\bfC\equiv(\bfC,\cM)$ is a finitary $\cM$-adhesive category with $\cM$-effective unions, $\cM$-initial object, an epi-$\cM$-factorisation, existence of final pullback complements (FPCs) for all pairs of composable $\cM$-morphisms and with stability of $\cM$-morphisms under FPCs.
\end{assumption}

While a full discussion of these technical concepts is out of the scope of the present paper, we will comment at various points on some of the salient ideas and uses of the various assumptions made. Suffice it here to note that $\cM$ is a class of monomorphisms, and that we impose the finitarity-constraint for practical reasons, in that in our intended applications only \emph{finite} structures and their transformations are of interest, with a prominent example given as follows:

\begin{example}
    The prototypical category suited for rewriting in the above sense is the category $\mathbf{FinGraph}$ of \emph{finite directed multigraphs}. It is well-known~\cite{lack2005adhesive} that the category $\mathbf{Graph}$ of all (not necessarily finite) multigraphs is \emph{adhesive} (i.e.\ $\cM$-adhesive for $\cM=\mono{\mathbf{Graph}}$), and thus due to results of~\cite{gabriel2014} so is its finitary restriction $\mathbf{FinGraph}$. The finitary restriction preserves the epi-mono-factorisation, the property of mono-effective unions as well as the mono-initial object $\mIO$ (the empty graph). Finally, according to~\cite{Corradini2006}, FPCs exist for arbitrary pairs of composable monos, and monos are stable under FPCs.
\end{example}

\subsection{Conditions in SqPO Rewriting}\label{sec:CSQPO}

For the readers' convenience, we recall some of the relevant background material and standard notations (including a convenient shorthand notation for simple, i.e. non-nested conditions). Conceptually, application conditions are defined such as to constrain the \emph{matches} of rewriting rules. Another important special case are conditions over the $\cM$-initial object $\mIO$. Such \emph{global constraints} describe properties of all objects, such as invariants.

\begin{definition}[cf.\ e.g.\ \cite{habel2009correctness,ehrig2014mathcal}]
    Given an $\cM$-adhesive category $\bfC$ satisfying Assumption~\ref{as:main}, (nested) \emph{conditions} $\cond{\bfC}$ over $\bfC$ are recursively defined as follows:
    \begin{enumerate}
            \item For all objects $X\in \obj{\bfC}$, $\ac{true}_X$ is a condition.
            \item For every $\cM$-morphism $(f:X\hookrightarrow Y)\in\cM$ and for every condition $\ac{c}_Y\in\cond{\bfC}$ over $Y$, $\exists (f,\ac{c}_Y)$ is a condition.
        \item If $\ac{c}_X\in\cond{\bfC}$ is a condition over $X$, so is $\neg \ac{c}_X$.
        \item If $\ac{c}_X^{(1)},\ac{c}_X^{(2)}\in\cond{\bfC}$ are conditions over $X$, so is $\ac{c}_X^{(1)}\land \ac{c}_X^{(2)}$.
    \end{enumerate}
    The \emph{satisfaction} of a condition $\ac{c}_X$ by an $\cM$-morphism $(h:X\hookrightarrow Z)\in\cM$ is recursively defined as follows:
    \begin{enumerate}
        \item $h\vDash\ac{true}_X$.
        \item $h\vDash\exists(f:X\hookrightarrow Y,\ac{c}_Y)$ iff there exists $(g:Y\hookrightarrow Z)\in\cM$ such that $h=g\circ f$ and $g\vDash \ac{c}_Y$.
        \item $h\vDash\neg \ac{c}_X$ iff $h\neg\vDash\ac{c}_X$.
        \item $h\vDash \ac{c}_X^{(1)}\land \ac{c}_X^{(2)}$ iff $h\vDash \ac{c}_X^{(1)}$ and $h\vDash \ac{c}_X^{(2)}$.
    \end{enumerate}
Two conditions $\ac{c}_X,\ac{c}_X'\in\cond{\bfC}$ are \emph{equivalent}, denoted  $\ac{c}_X\equiv\ac{c}_X'$, iff for every $(h:X\hookrightarrow Z)\in\cM$ we find that $h\vDash\ac{c}_X$ implies $h\vDash\ac{c}_X'$ and vice versa.
\end{definition}
We will employ the following standard shorthand notations:
\begin{equation*}
    \exists(f:X\hookrightarrow Y):=\exists(f:X\hookrightarrow Y,\ac{true}_Y)\,,\quad
    \forall(f:X\hookrightarrow Y,\ac{c}_Y):=\neg \exists(f:X\hookrightarrow Y,\neg\ac{c}_Y)\,.
\end{equation*}

An auxiliary construction~\cite{habel2009correctness}  permits to extend conditions into a larger context.
\begin{theorem}\label{thm:Shift}
    With notations as above and for $(f:X\hookrightarrow Y)\in\cM$, there exists a \emph{shift construction} $\Shift$ such that
    \begin{equation}
    \begin{aligned}
        &\forall \ac{c}_X\in\cond{\bfC},\forall (h:X\hookrightarrow Z)\in\cM: \exists (g:Y\hookrightarrow Z)\in\cM: h=g\circ f\\
        &\Rightarrow \quad (h\vDash \ac{c}_X \Leftrightarrow g\vDash\Shift(f,\ac{c}_X))\,.
    \end{aligned}
    \end{equation}
\end{theorem}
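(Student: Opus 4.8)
The plan is to define $\Shift$ by recursion on the structure of $\ac{c}_X$, mirroring the recursive definition of conditions, and then to establish the biconditional by structural induction, carrying the factorisation hypothesis $h=g\circ f$ through every case. The statement I would actually induct on is the strengthened, $f$-polymorphic version: for every $\cM$-morphism $p:A\hookrightarrow B$, every condition $\ac{c}_A$, and every $(q:A\hookrightarrow Z)\in\cM$ that factors as $q=r\circ p$, one has $q\vDash\ac{c}_A\Leftrightarrow r\vDash\Shift(p,\ac{c}_A)$, so that the induction hypothesis is available for subconditions along the newly introduced morphisms.

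For the construction I would set $\Shift(f,\ac{true}_X):=\ac{true}_Y$, together with $\Shift(f,\neg\ac{c}_X):=\neg\Shift(f,\ac{c}_X)$ and $\Shift(f,\ac{c}_X^{(1)}\land\ac{c}_X^{(2)}):=\Shift(f,\ac{c}_X^{(1)})\land\Shift(f,\ac{c}_X^{(2)})$, so that only the existential clause carries content. For $\ac{c}_X=\exists(a:X\hookrightarrow W,\ac{c}_W)$ I would put
\[
  \Shift(f,\exists(a,\ac{c}_W)):=\bigvee_{(b,a')}\exists\big(b:Y\hookrightarrow Y',\,\Shift(a',\ac{c}_W)\big)\,,
\]
where the disjunction ranges over representatives, up to isomorphism, of all jointly epimorphic pairs of $\cM$-morphisms $(b:Y\hookrightarrow Y',\,a':W\hookrightarrow Y')$ satisfying $b\circ f=a'\circ a$. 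Finitarity together with the epi-$\cM$-factorisation and $\cM$-effective unions guarantees that only finitely many such pairs exist (and the pushout of $f$ and $a$ always supplies one), so the disjunction is finite and well-defined.

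The verification would then proceed by induction. The case $\ac{true}_X$ is immediate since both sides always hold, and the Boolean cases follow directly from the induction hypothesis and the satisfaction clauses for $\neg$ and $\land$. The existential case is the heart of the argument. For the forward direction, from $h\vDash\exists(a,\ac{c}_W)$ I would obtain $(q:W\hookrightarrow Z)\in\cM$ with $h=q\circ a$ and $q\vDash\ac{c}_W$, then factor the pair $(g,q)$ through the union of its images to get $(m:Y'\hookrightarrow Z)\in\cM$ and a jointly epimorphic pair $b:Y\hookrightarrow Y'$, $a':W\hookrightarrow Y'$ with $g=m\circ b$, $q=m\circ a'$ and $b\circ f=a'\circ a$; the $\cM$-decomposition property forces $b,a'\in\cM$, so $(b,a')$ is one of the disjuncts. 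Applying the induction hypothesis to $q=m\circ a'$ yields $m\vDash\Shift(a',\ac{c}_W)$, whence $g\vDash\exists(b,\Shift(a',\ac{c}_W))$ and thus $g\vDash\Shift(f,\ac{c}_X)$. For the converse, a satisfied disjunct supplies $(m:Y'\hookrightarrow Z)\in\cM$ with $g=m\circ b$ and $m\vDash\Shift(a',\ac{c}_W)$; setting $q:=m\circ a'\in\cM$ gives $h=g\circ f=m\circ b\circ f=m\circ a'\circ a=q\circ a$, and the induction hypothesis returns $q\vDash\ac{c}_W$, so $h\vDash\exists(a,\ac{c}_W)$.

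The hard part will be the existential case, and within it the construction and control of the index set of jointly epimorphic pairs. The delicate points are to show, via the epi-$\cM$-factorisation and $\cM$-effective unions, that any cospan $(g,q)$ of $\cM$-morphisms into $Z$ factors canonically through a jointly epimorphic pair of \emph{$\cM$}-morphisms, with the stability of $\cM$ under decomposition being exactly what keeps both legs $b,a'$ in $\cM$; and to confirm that finitarity bounds this index set so that $\Shift$ returns a genuine (finite) condition. Everything else reduces to bookkeeping that I expect to be routine.
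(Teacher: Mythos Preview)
The paper does not give its own proof of this theorem: it is stated as a known result with a reference to \cite{habel2009correctness}, and the authors explicitly defer the concrete definition of $\Shift$ (and $\Trans$) to \cite[Thms.~4 and~7]{behrRaSiR}. Your proposal reconstructs precisely the standard Habel--Pennemann $\Shift$ construction and its inductive correctness proof that those references contain, so in substance you are aligned with what the paper invokes, even though there is nothing in the paper itself to compare against line by line.

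One small point worth tightening: your finiteness argument for the index set of jointly epimorphic $\cM$-pairs is stated a bit loosely. What actually bounds the disjunction is that every such pair $(b,a')$ is, via the universal property of the pushout $P=\pO{X\hookrightarrow Y,\,X\hookrightarrow W}$, an epi quotient of $P$; finitarity in the sense of Assumption~\ref{as:main} (finitely many $\cM$-subobjects, hence in the categories of interest finitely many quotient objects of a finite object) is what makes this set finite. Phrasing it through the pushout makes the role of the epi-$\cM$-factorisation transparent and avoids the impression that $\cM$-effective unions are doing the finiteness work (they are used in the forward direction of the existential case to keep the mediating morphism in $\cM$, not for finiteness).
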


For the computation of SqPO rule compositions for rules with conditions, we need an additional auxiliary construction for the calculus of conditions:
\begin{theorem}[\cite{behrRaSiR}, Thm.~7]\label{thm:Trans}
    Given a linear rule $r=(O\hookleftarrow K\hookrightarrow I)\in\Lin{\bfC}$ and a condition $\ac{c}_O\in\cond{\bfC}$ over $O$, there exists a \emph{transport construction} $\Trans$ such that for any object $X\in\obj{\bfC}$ and for any SqPO-admissible match $m\in\OMatch{r}{X}$ of $r$ into $X$, if $(m^{*}:O\hookrightarrow r_m(X))\in\cM$ denotes the comatch of $m$, the following holds:
\begin{equation}
    m^{*}\vDash \ac{c}_O \quad \Leftrightarrow \quad m\vDash\Trans(r,\ac{c}_O)\,.
\end{equation}
\end{theorem}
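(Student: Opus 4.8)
The plan is to define $\Trans(r,-)$ by recursion on the nesting structure of $\ac{c}_O$ and to verify the claimed equivalence by induction on that same structure. On the propositional fragment the only sensible choice is the homomorphic one, namely $\Trans(r,\ac{true}_O):=\ac{true}_I$, $\Trans(r,\neg\ac{c}_O):=\neg\,\Trans(r,\ac{c}_O)$ and $\Trans(r,\ac{c}^{(1)}_O\land\ac{c}^{(2)}_O):=\Trans(r,\ac{c}^{(1)}_O)\land\Trans(r,\ac{c}^{(2)}_O)$. Because the comatch $m^{*}$ is a single $\cM$-morphism uniquely determined by $m$ via the SqPO construction, the equivalence in these three cases follows immediately by unfolding the recursive definition of $\vDash$ and applying the induction hypothesis; no categorical geometry is involved here.

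All the work sits in the existential case $\ac{c}_O=\exists(a\colon O\hookrightarrow O',\ac{c}'_{O'})$. Here I would use that, by the SqPO construction, $r_m(X)$ is the pushout of $O\hookleftarrow K\hookrightarrow D$ along the intermediate object $D$, where $K\hookrightarrow D\hookrightarrow X$ is a final pullback complement of the composable pair $K\hookrightarrow I$ and $m\colon I\hookrightarrow X$ and is therefore in particular a pullback. A witness of $m^{*}\vDash\exists(a,\ac{c}'_{O'})$ is an $\cM$-morphism $q\colon O'\hookrightarrow r_m(X)$ with $q\circ a=m^{*}$ and $q\vDash\ac{c}'_{O'}$. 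Pulling the pushout square for $r_m(X)$ back along $q$ and invoking the $\cM$-van-Kampen property guaranteed by Assumption~\ref{as:main}, the object $O'$ decomposes canonically as the pushout of $O\hookleftarrow K'\hookrightarrow E$, where $K':=O'\times_{r_m(X)}K$ is a subobject of $K$ and $E:=O'\times_{r_m(X)}D$ is the part of $O'$ realised in the preserved context. Crucially, since the right square is a pullback, $E$ meets the image of $I$ in $X$ exactly in $K'$, so the ``new'' context carried by $E$ is automatically disjoint from $I$ away from $K'$; this is where the FPC does its work, and it is what lets the gluing happen exactly along $K'$ rather than requiring a separate $\Shift$-style enumeration of overlaps (cf.\ Theorem~\ref{thm:Shift}).

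This suggests the following construction. For each of the finitely many (by finitarity) decompositions $\sigma$ of $a\colon O\hookrightarrow O'$ as a pushout of $O\hookleftarrow K'_\sigma\hookrightarrow E_\sigma$ with $K'_\sigma\hookrightarrow K$ a subobject, I form the extended interface $K_\sigma$ as the pushout of $K\hookleftarrow K'_\sigma\hookrightarrow E_\sigma$, the extended input $I_\sigma$ as the pushout of $I\hookleftarrow K'_\sigma\hookrightarrow E_\sigma$ with its induced extension $b_\sigma\colon I\hookrightarrow I_\sigma$, and the extended rule $r_\sigma:=(O'\hookleftarrow K_\sigma\hookrightarrow I_\sigma)$; pasting of pushouts gives $O'\cong O\cup_K K_\sigma$, so $r_\sigma$ has output $O'$ and differs from $r$ only by preserved context. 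I then set
\[
\Trans\big(r,\exists(a,\ac{c}'_{O'})\big)\ :=\ \bigvee_{\sigma}\ \exists\big(b_\sigma\colon I\hookrightarrow I_\sigma,\ \Trans(r_\sigma,\ac{c}'_{O'})\big),
\]
reading the finite disjunction as the derived connective $\neg\bigwedge\neg$. The equivalence is then established in two directions: given a witness $q$, the van-Kampen decomposition above yields some $\sigma$ together with a match extension $m_\sigma\colon I_\sigma\hookrightarrow X$ of $m$ built from $E_\sigma\hookrightarrow D\hookrightarrow X$, and one checks that applying $r_\sigma$ at $m_\sigma$ reproduces the same result object $r_m(X)$ with comatch exactly $q$; conversely, any match extension satisfying a disjunct induces such a $q$. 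In both directions the inner equivalence $q\vDash\ac{c}'_{O'}\Leftrightarrow m_\sigma\vDash\Trans(r_\sigma,\ac{c}'_{O'})$ is supplied by the induction hypothesis, and although the rule grows from $r$ to $r_\sigma$ the nesting depth of the condition strictly decreases, so the induction is well-founded.

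I expect the main obstacle to be the precise matching of SqPO applications across the extension, i.e.\ showing that extending a match $m$ by the preserved context $E_\sigma$ is again SqPO-admissible and that the resulting derivation has the \emph{same} result object $r_m(X)$ and comatch $q$. This is where the structural hypotheses of Assumption~\ref{as:main} become indispensable: the $\cM$-van-Kampen property to obtain and recombine the decomposition, the existence and stability of final pullback complements together with their composition and decomposition behaviour to transfer admissibility and to identify the comatch, $\cM$-effective unions to make the pushout gluings along $K'_\sigma$ well-behaved, and finitarity to ensure the disjunction ranges over only finitely many $\sigma$. The bookkeeping establishing that $\sigma\mapsto(m_\sigma,q)$ is a bijection between decompositions-with-match-extensions and witnesses $q$ is routine in nature but is the delicate heart of the argument.
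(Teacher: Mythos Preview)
The paper does not prove this theorem. It is imported verbatim from \cite{behrRaSiR} (Theorem~7 there), and immediately after stating Theorems~\ref{thm:Shift} and~\ref{thm:Trans} the authors write that ``the concrete implementation of the $\Shift$ and $\Trans$ constructions are not of relevance to the present paper, so we refer the interested readers to~\cite[Thms.~4 and~7]{behrRaSiR} for the precise details and further technical discussions.'' There is therefore no in-paper argument to compare your proposal against.

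On its own merits, your plan is the standard one: structural induction on the nesting of conditions, with the propositional connectives handled homomorphically and all the content concentrated in the existential case. Your identification of the required ingredients (the $\cM$-van-Kampen property to decompose a witness, FPC composition/decomposition to transfer SqPO-admissibility and identify the comatch, $\cM$-effective unions to keep the glued morphisms in $\cM$, and finitarity to make the disjunction finite) is accurate, and your diagnosis that the delicate part is showing that the derivation of $r_\sigma$ at the extended match $m_\sigma$ reproduces exactly $r_m(X)$ with comatch $q$ is on the mark.

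One point worth tightening: your disjunction over all subobjects $K'_\sigma\hookrightarrow K$ is larger than necessary. In the van-Kampen decomposition you describe, the pullback $O'\times_{r_m(X)}K$ is not a proper subobject of $K$ but is canonically isomorphic to $K$ itself, because $K\to O\xrightarrow{a}O'$ together with $K\to D$ induces a section of the mono projection $O'\times_{r_m(X)}K\to K$. Consequently every witness $q$ produces a decomposition with $K'_\sigma=K$, i.e.\ a pushout complement of $K\hookrightarrow O\xrightarrow{a}O'$. Conversely one can check that if any disjunct with $K'_\sigma\subsetneq K$ is satisfied by some $m_\sigma$, the resulting witness $q$ forces the pushout complement along $K$ to exist after all, so those extra disjuncts are semantically redundant (they are either unsatisfiable or subsumed). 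The construction therefore collapses to a single existential built from the pushout complement of $K\hookrightarrow O\xrightarrow{a}O'$ (yielding $E$), the extended input $I':=I\cup_K E$, and the extended rule $r':=(O'\hookleftarrow E\hookrightarrow I')$, with the clause taken as $\ac{false}$ when that pushout complement fails to exist. This streamlined form is what one finds in the cited source, but your version is not wrong, merely non-minimal.
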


While an interesting set of theoretical results in its own right, the concrete implementation of the $\Shift$ and $\Trans$ constructions are not of relevance to the present paper, so we refer the interested readers to~\cite[Thms.~4 and~7]{behrRaSiR} for the precise details and further technical discussions.

\subsection{Rigidity}
\label{ssec:rigidity}

\begin{example}\label{ex:rGraph}
    Our running example throughout this paper will be the category $\mathbf{rGraph}$ of \emph{finite rigid directed multigraphs}. Referring to~\cite{Danos2014} for an extended discussion of the rigidity phenomenon, suffice it here to introduce this category as a refinement of the category $\mathbf{FinGraph}$ via imposing the following \emph{global constraint} formulated via a set $\cN$ of \emph{forbidden patterns}:\label{eq:gsc}
    \begin{equation}
        \ac{c}_{\cN}:=\bigwedge\limits_{N\in \cN} \neg\exists(\mIO\hookrightarrow N)\,,\quad 
        \cN:=\{%
        \inputtikz{N1},
        \inputtikzB{-0.15}{N2},
         \inputtikzB{-0.15}{N3},
          \inputtikzB{-0.15}{N4}\}\,.
\end{equation}
Connected components of graphs in $\mathbf{rGraph}$ are thus either individual vertices, ``directed paths'' $\pi_n$ of edges (with lengths $n\geq1$) or closed ``directed loops'' $\lambda_n$ of edges (with lengths $n\geq2$),
\begin{equation*}
\includegraphics{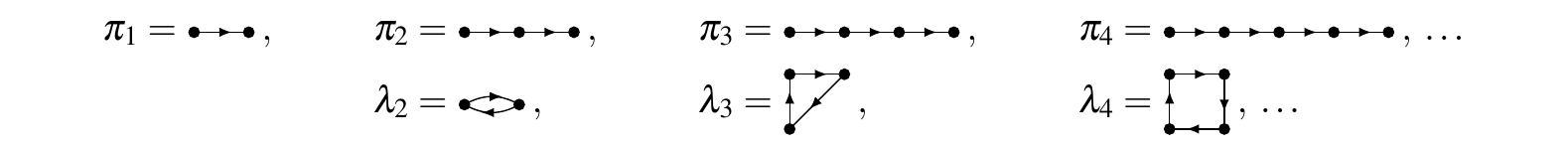}
\end{equation*}%
possibly in addition with individual loops on vertices, albeit we will only consider loop-less graphs in our applications. It is instructive to consider for this ``self-loop-less'' special case of graphs in $\mathbf{rGraph}$ the numbers of isomorphism classes as a function of the number of vertices. Following standard combinatorial arguments along the lines of~\cite{10.5555/1506267}, we can construct an ordinary generating function for these numbers by (1) dressing each vertex with a power of some formal variable $x$, each edge in a ``path'' with $y$ and each edge in a ``directed loop'' with $z$, and (2) multiplying the OGFs for copies of the different types of connected components:
\begin{equation*}
\begin{aligned}
\cG(x,y,z)&:=\sum_{k,\ell,m\geq 0} x^ky^{\ell}z^m\, \cdot(\text{\# of iso-classes with $k$ vertices and $\ell+m$ edges})\\
&= \tfrac{1}{1-x}
\prod_{p\geq 1}\tfrac{1}{1-x^{p+1}y^p}
\prod_{q\geq 2}\tfrac{1}{1-(xz)^q}\;
\xrightarrow{y,z\to1}
\tfrac{1-x}{((x;x)_{\infty})^2}\,.
\end{aligned}
\end{equation*}
The expansion of $\cG(x,1,1)$ (i.e.\ of the OGF of iso-classes counted by number of vertices) permits to identify this series as the OGF of the integer sequence \href{https://oeis.org/A000990}{A000990} on the OEIS database\footnote{%
N. J. A. Sloane, The On-Line Encyclopedia of Integer Sequences (\url{https://oeis.org})}, with sequence entries (for $n=0,\dotsc,10,\ldots$)
\begin{equation}
1, 1, 3, 5, 10, 16, 29, 45, 75, 115, 181, \ldots
\end{equation}
Comparing this to the sequence \href{https://oeis.org/A000273}{A000273} on the OEIS database, which gives the number of isomorphism classes of directed simple graphs without self-loops counted by number of vertices, and which reads for $n=0,\dotsc,10$
\begin{equation}
\begin{aligned}
1, 1, 3, 16, 218, 9608, 1540944, 882033440, 1793359192848, \\
13027956824399552, 341260431952972580352, \ldots\,,
\end{aligned}
\end{equation}
it transpires that the rigid directed graphs are enormously constrained in their structure compared to general simple directed graphs. This fact will in turn have a strong impact on the complexity of computing compositions of transformation rules based upon rigid graphs, in that any computation strategy based upon a ``generate-and-test'' approach will suffer from the enormous number of candidate rule overlaps. The way rigid graphs are restricted in their structure as compared to generic simple graphs is furthermore quite analogous to the way e.g.\ molecular graphs in the rewriting theory for organic chemistry~\cite{Benk2003,Andersen2016} or for biochemistry~\cite{danos2004formal,danos2012graphs,Boutillier:2018aa} are restricted as compared to generic (typed) simple graphs, which is why we take $\mathbf{rGraph}$ as a prototypical example of such types of applications of rewriting theory.
\end{example}

\subsection{SqPO Direct Derivations and Rule Compositions}

The theory of ``compositional'' SqPO rewriting as introduced in~\cite{nbSqPO2019,behrRaSiR,bk2020a} is an extension of the traditional SqPO theory~\cite{Corradini2006} by concurrency and associativity theorems that hold under suitable assumptions on the underlying categories. 

\begin{remark}
Contrary to standard conventions we will read rewriting rules from \textbf{input} to \textbf{output}, i.e.\ in particular from \textbf{right to left}. This is so as to be compatible with the standard mathematical convention of left-multiplication for matrices (see e.g.\ \cite{bdg2019} for an extended discussion).
\end{remark}

\begin{definition}\label{def:linRules}
Let $\bfC$ be an $\cM$-adhesive category satisfying Assumption~\ref{as:main}, and let $\LinAc{\bfC}$ denote the class of \emph{linear rewriting rules with (nested) conditions},
    \begin{equation}
        \LinAc{\bfC}:=\{(O\hookleftarrow K\hookrightarrow I;\ac{c}_I)\mid (K\hookrightarrow O),(K\hookrightarrow I)\in\cM,\ac{c}_I\in\cond{\bfC}\}\,.
    \end{equation}
    Let $\LinEq{\bfC}$ be the class of \emph{equivalence classes of linear rules with conditions} under the equivalence relation $\sim$ defined as follows:
    \begin{equation}
        (r,\ac{c}_I)\sim(r',\ac{c}_I'):\Leftrightarrow r\cong r' \land \ac{c}_I\equiv \ac{c}_I'\,.
    \end{equation}
    Here, $r\cong r'$ if and only if there exist isomorphisms $(\omega:O\rightarrow O')$, $(\kappa:K\rightarrow K')$ and $(\iota:I\rightarrow I')\in\iso{\bfC}$ such that the evident diagram commutes.
\end{definition}
The following definition provides a notion of direct derivations for SqPO rules with conditions.

\begin{definition}[\cite{behrRaSiR}, Def.~17; compare \cite{Corradini2006}, Def.~4]
\label{def:SqPOr}
Given an object $X\in \obj{\bfC}$ and a linear rule with condition $R=(r,\ac{c}_{I})\in \LinAc{\bfC}$, we define the \emph{set of admissible matches} $\OMatch{R}{X}$ as 
\begin{equation}
  \OMatch{R}{X}:=\{(m:I\rightarrow X)\in\cM\mid m\vDash \ac{c}_I\}\,.
\end{equation}

{
\makeatletter
\let\par\@@par
\par\parshape0
\everypar{}
\begin{wrapfigure}[5]{r}{0.35\linewidth}
\centering
\vspace{-2.45em}
\begin{equation}\label{eq:SqPO}\gdef\mycdScale{1}
\quad\inputtikz{SqPOdd}
\end{equation}
\end{wrapfigure}
\noindent A \emph{direct derivation of $X$ along $R$ with match $m\in\OMatch{R}{X}$} is defined via constructing the diagram on the right, with final pullback complement marked $\mathsf{FPC}$ and pushout marked $\mathsf{PO}$. We write $R_m(X):= X'$ for the object ``produced'' by the above diagram, and we refer to $m^{*}$ as the \emph{comatch of $m$}.
\par
\makeatother
}
\end{definition}
As is well-known in the graph-rewriting community, the notion of FPC along $\cM$-morphisms has a very natural interpretation in the setting of $\bfC$ being some form of graph-based category, e.g.\ $\bfC=\mathbf{FinGraph}$ (for which $\cM$ is the class of monomorphisms of finite graphs): considering a linear rule $(\mIO\hookleftarrow\mIO\hookrightarrow \bullet)$ encoding (read from right to left) the deletion of a vertex, choosing a match (i.e.\ an embedding) of the input vertex of the rule into some finite graph $X$ at a vertex with incident edges and forming the FPC effectively results in an intermediate graph $\overline{K}$ that is obtained from $X$ by removing the matched vertex and all incident edges. In contrast, a \emph{pushout complement (POC)} for the same rule and match to a vertex in $X$ with incident edges would not exist, highlighting a characteristic feature in \emph{Double-Pushout (DPO)} rewriting in that edges may not be ``implicitly'' deleted.

The second main definition of SqPO-type ``compositional'' rewriting theory is given by a notion of sequential composition for rules with application conditions.

\begin{definition}[\cite{behrRaSiR}]\label{def:SqPOcomp}
    With notations as above, let $R_j=(r_j,\ac{c}_{I_j})\in \LinEq{\bfC}$ be two equivalence classes of linear rules with conditions ($j=1,2$). Fixing representatives $(O_j\hookleftarrow K_j\hookrightarrow I_j;\ac{c}_{I_j})$ ($j=1,2$), a monic span $\mu=(I_2\hookleftarrow M_{21}\hookrightarrow O_1)$ is defined to be an \emph{SqPO-admissible match of $R_2$ into $R_1$}, denoted $\mu\in\RMatch{R_2}{R_1}$, if the pushout complement marked $\mathsf{POC}$ in the diagram below exists\footnote{Note that if the diagram is constructable, it follows from the properties of $\cM$-adhesive categories and from further properties stated in Assumption~\ref{as:main} that as indicated in the diagram all arrows are $\cM$-morphisms. More precisely, the technical properties utilised to prove this claim are the stability of $\cM$-morphisms under pushouts and pullbacks~\cite{ehrig2014mathcal} as well as the stability of $\cM$-morphisms under FPCs. The proof that this type of composition indeed leads to a suitable notion of concurrency theorem hinges also upon $\cM$-effective unions and the existence of an $\cM$-initial object~\cite{nbSqPO2019,behrRaSiR}. See Section~\ref{sec:CSQPO}, Theorems~\ref{thm:Shift}  and~\ref{thm:Trans} for notations and further details.},
\begin{equation}\label{eq:SqPOrComp}
    \inputtikz{SqPOcomp}
\end{equation}
and if in addition $\ac{c}_{I_{21}}\not\equiv \ac{false}$, where
\begin{equation*}
\begin{aligned}
\ac{c}_{I_{21}}&=\Shift(I_1\hookrightarrow I_{21},\ac{c}_{I_{1}}) \land
\Trans(N_{21}\hookleftarrow K_1'\hookrightarrow I_{21},\Shift(I_2\hookrightarrow N_{21},\ac{c}_{I_2}))\,.
\end{aligned}
\end{equation*}
In this case, we define the \emph{SqPO-type composition of $R_2$ with $R_1$ along $\mu_{21}$} as the following equivalence class:
\begin{equation}
    \sqComp{R_2}{\mu}{R_1}:=[(O_{21}\hookleftarrow K_{21}\hookrightarrow I_{21};\ac{c}_{I_{21}})]_{\sim}\,.
\end{equation}
\end{definition}

\begin{example}[Ex.~\ref{ex:rGraph} continued]\label{ex:polymer}
    Seeing that rigid graphs as described in Example~\ref{ex:rGraph} may be considered as a toy model of \emph{polymers} that can form ``directed chains'' and ``directed loops'', one might ask how one could construct a stochastic dynamical model over $\mathbf{rGaph}$. Referring the interested readers to~\cite{nbSqPO2019,bdg2019,bk2020a} for a more in-depth discussion and the precise theoretical setup, for the present paper we focus on two types of contributions to a stochastic rewriting theory setup:
    \begin{enumerate}
        \item \textbf{Transitions} are modelled via rewriting rules with conditions $R_j=(r_j,\ac{c}_{I_j})\in \LinEq{\mathbf{rGraph}}$, one for each transition of the model. Choosing \emph{base rates} $\kappa_j\in \bR_{\geq 0}$ (i.e.\ rates of ``firings per second'') for each transition, the so-called ``mass-action semantics''~\cite{bk2020a} would render the likelihood of a given transition ``firing'' in an infinitesimal step proportional to its base rate times the number of matches of the rule into the current state.
        \item \textbf{Observables} implementing the detection and counting of \emph{patterns} are based upon rules of the special form~\cite{bk2020a}
        \begin{equation}
            R_{P,\ac{c}_P}=[(P\xhookleftarrow{id_P}P\xhookrightarrow{id_P}P; \ac{c}_P)]_{\sim}\,,
        \end{equation}
        where $P\in\obj{\bfC}_{\cong}$ is the ``pattern'', and where the condition $\ac{c}_P\in \cond{\bfC}$ in effect permits to implement pattern counting e.g. of the sort ``two vertices not linked by an edge''.
    \end{enumerate}
For our polymer example, one could start the construction of a stochastic model by considering a set of transitions that in effect are capable of rendering arbitrary finite rigid graphs through repeated ``firings'' of the transitions, i.e.\ the vertex deletion/creation and edge deletion/creation rules depicted in Figure~\ref{fig:ruleEx}\textbf{(a)}. Note that the application conditions necessary for these rules to respect the rigid graph constraints may be derived e.g.\ according to the algorithm presented in Corollary~\ref{cor:macs} (Section~\ref{sec:ccs}). In order to model certain physical effects such as saturation and steric hindrance effects, one might also consider to include more sophisticated transitions involving larger input and output graphs, such as the ones depicted in Figure~\ref{fig:ruleEx}\textbf{(b-d)}. Typical observables for the rule-sets presented would then e.g.\ be constructed in the form $R_{I_j,\ac{c}_{I_j}}$, i.e.\ from the input patterns and conditions (since these observables feature in the evolution equations of the respective stochastic rewriting systems, cf.\ \cite{bk2020a}).
\end{example}

\begin{figure}[t]
\label{fig:ruleEx}
\centering
\includegraphics{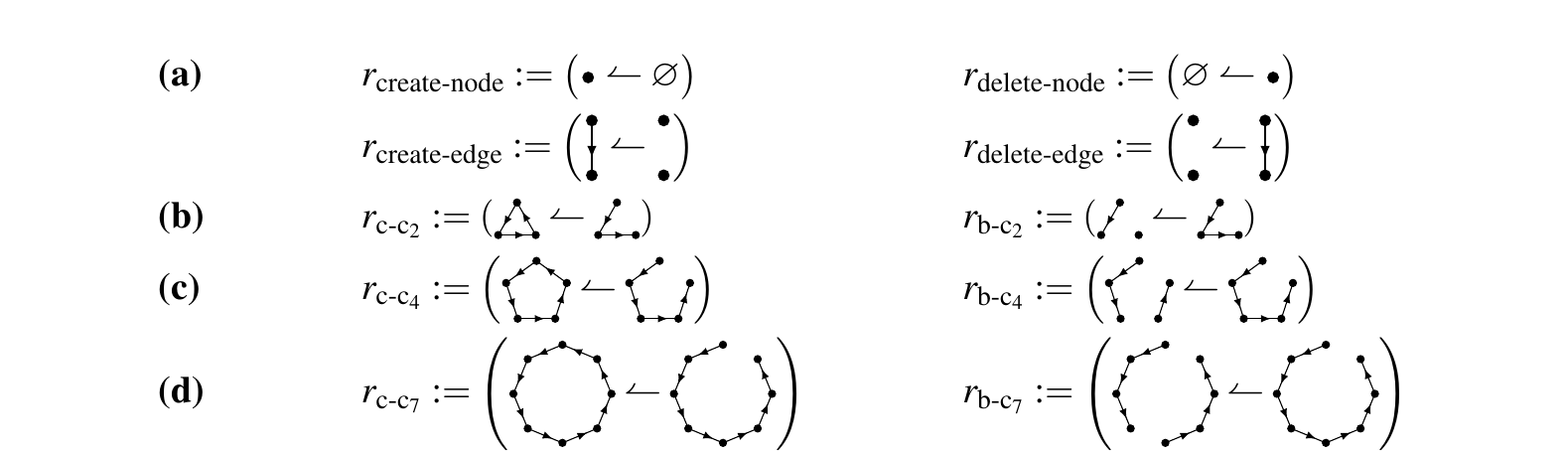}
\caption{Examples of transitions $R_j=(r_j,\ac{c}_{I_j})\in \LinEq{\mathbf{rGraph}}$ for rigid graph stochastic rewriting models: \textbf{(a)} the ``running example'' as a minimal setup, and \textbf{(b-d)} additional transitions involving larger graph patterns. Concretely, these rules model the two alternative options of closing a ``chain`` of edges into a loop (``\underline{c}reate-\underline{c}ycle``), or breaking up a chain of edges (``\underline{b}reak-\underline{c}hain``). The conditions $\ac{c}_{I_j}$ (not explicitly depicted) are defined as the respective constraint-preserving application conditions computed according to Corollary~\ref{cor:macs}.}
\end{figure}

\section{Constraint-checking Strategies in Rule Compositions}\label{sec:ccs}

When computing SqPO-type compositions of conditional rules, an algorithmically expensive step consists in verifying the satisfaction of both the global and application conditions in a given overlap, followed by computing the derived application conditions of the admissible composites. Both in order to experiment with the implementations of such algorithms via Z3 (see next Section) and out of a theoretical interest, we consider different implementation strategies for the steps involved in this rule composition operation. We will present here the ``direct'' strategy (i.e.\ following precisely the traditional constructions involving $\Shift$ and $\Trans$) as well as an alternative strategy based upon certain span (non-) embedding criteria. From this section onwards, we will fix a category $\bfC$ satisfying Assumption~\ref{as:main}, and assume a set of global conditions $\ac{c}_{\cN}$ as in~\eqref{eq:gsc} on objects. Let us further assume that all rules are endowed with application conditions that ensure the preservation of $\ac{c}_{\cN}$, and that the rules themselves are constructed of objects satisfying the constraint $\ac{c}_{\cN}$.

\begin{definition}[Direct Strategy] Given two linear rules with conditions $R_j\equiv(r_j,\ac{c}_{I_j})\in \LinAc{\bfC}$ ($j=1,2$) and a candidate match $\mu=(I_2\hookleftarrow M_{21}\hookrightarrow O_1)$, the \emph{Direct Strategy} to verify whether $\mu$ is an admissible match is defined as follows:
\begin{enumerate}
    \item Construct the pushout $(I_2\hookrightarrow N_{21}\hookleftarrow O_1)$ of $\mu$. Verify that $N_{21}\vDash\ac{c}_{\cN}$, and that $(I_2\hookrightarrow N_{21})\vDash\ac{c}_{I_2}$.
    \item If the pushout complement of $(N_{21}\hookleftarrow O_1,O_1\hookleftarrow K_1)$ exists, perform the SqPO-type composition according to Definition~\ref{def:SqPOcomp}, verifying that the application condition of the composite rule satisfies $\ac{c}_{I_{21}}\!\not\equiv\;\ac{false}$.
\end{enumerate} 
If both steps are successful, $\mu$ is an SqPO-admissible match of $R_2$ into $R_1$.
\end{definition}

As an alternative strategy, let us restate the ``forbidden patterns'' $N\in \cN$ via their \emph{pushout decompositions}, defined as follows:
\begin{definition}
    We define the \emph{set of ``forbidden relations''} $\cS_{\cN}$ as\footnote{As a further optimisation (tacitly employed from hereon and in our implementation), one may reduce the size of $\cS_{\cN}$ by only retaining one representative per isomorphism class of forbidden relations, with the same notion of isomorphism as the one utilised for linear rules (cf.\ Definition~\ref{def:linRules}).}
    \begin{equation}
        \cS_{\cN}:=\{
            s=(C_1\hookleftarrow D\hookrightarrow C_2)\mid C_1,D,C_2\vDash\ac{c}_{\cN}\land \exists N\in \cN:\; \pO{s}\cong N
        \}\,.
    \end{equation}
\end{definition}

\begin{example}
    For the category $\mathbf{rGraph}$ as introduced in Ex.~\ref{ex:rGraph}, one may compute the set of ``forbidden relations'' (with colours encoding the respective embeddings) as presented in Figure~\ref{fig:frs}.
\end{example}

\begin{figure}[t]
\label{fig:frs}
\centering
\includegraphics{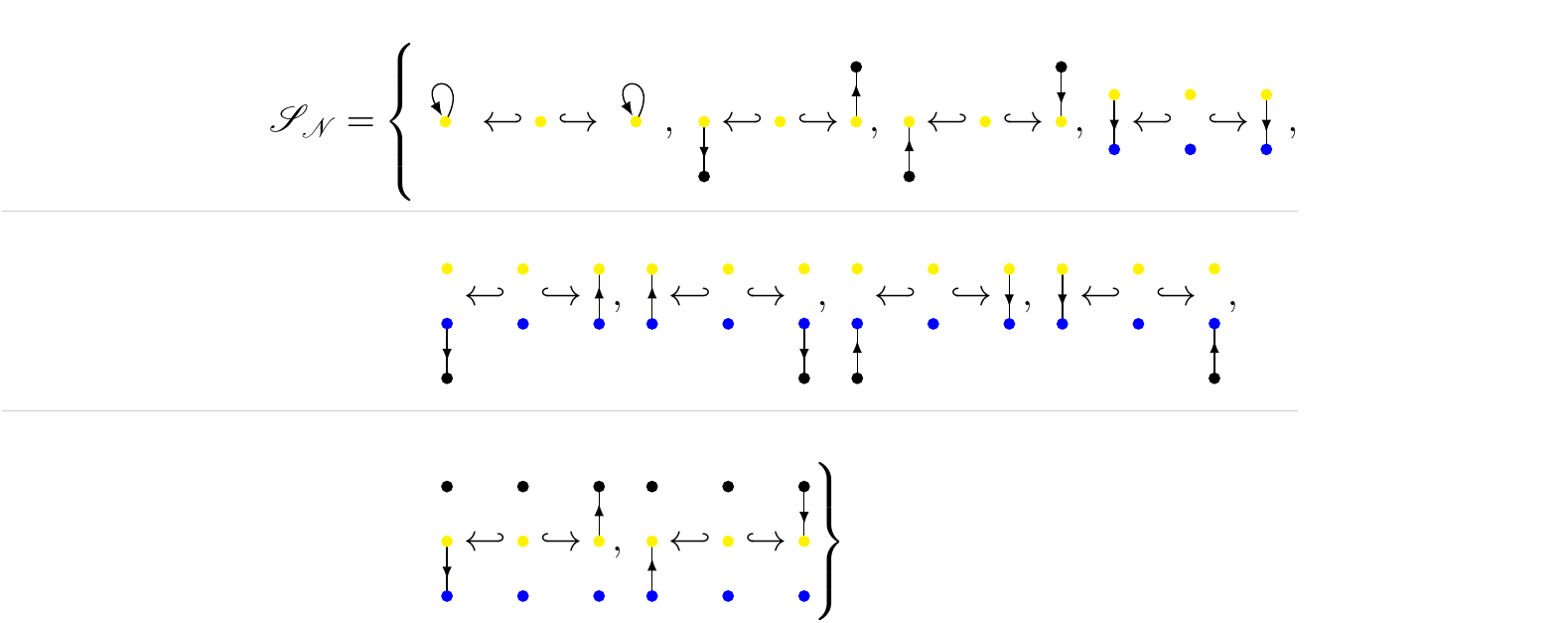}
\caption{Set of ``forbidden relations'' for the category $\mathbf{rGraph}$.}
\end{figure}

The pushout decompositions of forbidden patterns allow for a modular strategy to testing admissibility of rule overlaps that does not require to find embeddings of patterns into the pushout of the overlap, but  \emph{double-pullback embeddings (DPEs)} of forbidden spans into the monic spans representing the overlaps.
\begin{theorem}\label{thm:dpe}
    With notations and assumptions as above, given a pushout $P$ of a monic span $(I_1\hookleftarrow M_{21}\hookrightarrow O_1)$, the violation of $\ac{c}_{\cN}$ is equivalently verified as
    \begin{equation}\label{eq:dpeCond}
    \begin{aligned}
        P\not\vDash\ac{c}_{\cN}&\;\Leftrightarrow\;
        \exists s=(C_2\hookleftarrow D\hookrightarrow C_1)\in \cS_{\cN}:\\
        &\quad\qquad 
        \exists (C_2\hookrightarrow I_2),(D\hookrightarrow M_{21}),(C_1\hookrightarrow O_1)\in \mono{\bfC}:\\
        &\quad\qquad\quad (C_2\hookleftarrow D\hookrightarrow M_{21})=\pB{C_2\hookrightarrow I_2\hookleftarrow M_{21}}\\
        &\quad\qquad\land\; (C_2\hookleftarrow D\hookrightarrow M_{21})=\pB{C_2\hookrightarrow I_2\hookleftarrow M_{21}}\,.
    \end{aligned}
    \end{equation}
Here, each DPE mentioned in~\eqref{eq:dpeCond} encodes a commutative diagram of the form
\begin{equation}\label{eq:DPEdiag}
    \inputtikz{DPEdiag}\,.
\end{equation} 
\end{theorem}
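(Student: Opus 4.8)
The plan is to recognise that both sides of~\eqref{eq:dpeCond} describe the two horizontal faces of one and the same commutative cube, whose bottom face is the defining pushout square of the monic span $(I_2\hookleftarrow M_{21}\hookrightarrow O_1)$ and whose top face is the pushout square exhibiting $N=\pO{s}$ for a forbidden relation $s=(C_2\hookleftarrow D\hookrightarrow C_1)$. The three vertical $\cM$-morphisms $C_2\hookrightarrow I_2$, $D\hookrightarrow M_{21}$, $C_1\hookrightarrow O_1$ are exactly the data of a DPE, and the two pullback requirements in~\eqref{eq:dpeCond} assert that the two ``back'' side faces of this cube are pullbacks. Since $\bfC$ is $\cM$-adhesive (Assumption~\ref{as:main}) and the legs of the span lie in $\cM$, the bottom pushout is an $\cM$-van Kampen square, and I would use this van Kampen property as the engine driving both implications. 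I would also record the elementary fact that $\ac{c}_{\cN}$ is closed under $\cM$-subobjects: if $A\hookrightarrow Y\in\cM$ and $Y\vDash\ac{c}_{\cN}$, then any $\cM$-embedding of some $N\in\cN$ into $A$ composes to one into $Y$, so $A\vDash\ac{c}_{\cN}$.

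For the direction ``$\Rightarrow$'', assume $P\not\vDash\ac{c}_{\cN}$, witnessed by some $N\in\cN$ and an $\cM$-morphism $N\hookrightarrow P$. First I would pull the entire bottom pushout square back along $N\hookrightarrow P$, setting $C_2:=\pB{N\hookrightarrow P\hookleftarrow I_2}$, $C_1:=\pB{N\hookrightarrow P\hookleftarrow O_1}$ and $D:=\pB{N\hookrightarrow P\hookleftarrow M_{21}}$. Stability of $\cM$ under pullback makes every induced map an $\cM$-morphism, and the pullback pasting lemma turns the side faces into pullbacks --- these are precisely the two DPE conditions of~\eqref{eq:dpeCond}. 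The $\cM$-van Kampen property then forces the top face to be a pushout, so $\pO{C_2\hookleftarrow D\hookrightarrow C_1}\cong N$. As $C_1,C_2,D$ are $\cM$-subobjects of $O_1,I_2,M_{21}$, each of which satisfies $\ac{c}_{\cN}$ (being an $\cM$-subobject of a rule object that respects the constraint), subobject-closure yields $C_1,C_2,D\vDash\ac{c}_{\cN}$, whence $s=(C_2\hookleftarrow D\hookrightarrow C_1)\in\cS_{\cN}$ together with the required DPE.

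For ``$\Leftarrow$'', assume a forbidden relation $s\in\cS_{\cN}$ equipped with a DPE into the span. The $\cM$-morphisms $C_2\hookrightarrow I_2\hookrightarrow P$ and $C_1\hookrightarrow O_1\hookrightarrow P$ agree on $D$: the DPE pullback conditions identify the composites $D\hookrightarrow C_2\hookrightarrow I_2$ and $D\hookrightarrow C_1\hookrightarrow O_1$ with $D\hookrightarrow M_{21}\hookrightarrow I_2$ and $D\hookrightarrow M_{21}\hookrightarrow O_1$ respectively, and since the bottom square commutes the two composites $D\to P$ coincide. The universal property of the top pushout then produces a comparison morphism $N\to P$. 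The crux is to show it lies in $\cM$, and here I would invoke $\cM$-effective unions: using that pushouts along $\cM$-morphisms are also pullbacks one has $\pB{I_2\hookrightarrow P\hookleftarrow O_1}\cong M_{21}$, and a short pullback-pasting computation combining this with the two DPE conditions identifies the intersection $\pB{C_2\hookrightarrow P\hookleftarrow C_1}$ with $D$. Since $N$ is simultaneously the pushout $C_2+_D C_1$, $\cM$-effective unions guarantee that the comparison $N\hookrightarrow P$ is an $\cM$-morphism, exhibiting an embedding of a forbidden pattern and hence $P\not\vDash\ac{c}_{\cN}$.

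The main obstacle I anticipate is the monicity of the comparison map in the ``$\Leftarrow$'' direction: unlike in the ``$\Rightarrow$'' direction, it cannot be read off directly from van Kampen descent and genuinely requires the effective-unions hypothesis, funnelled through the intersection identity $\pB{C_2\hookrightarrow P\hookleftarrow C_1}\cong D$. A secondary point of care is the bookkeeping in the cube --- aligning the orientation of the two DPE pullback squares in~\eqref{eq:dpeCond} with the back faces so that the van Kampen hypothesis applies, and noting in passing that the second displayed pullback condition should read $\pB{C_1\hookrightarrow O_1\hookleftarrow M_{21}}$, the evident companion of the first.
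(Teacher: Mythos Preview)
Your proposal is correct and follows essentially the same route as the paper: the $\cM$-van Kampen property of the bottom pushout drives the $\Rightarrow$ direction, and $\cM$-effective unions (via the identification $D\cong\pB{C_2\hookrightarrow P\hookleftarrow C_1}$) secure monicity of $N\to P$ in the $\Leftarrow$ direction. The only cosmetic difference is that you obtain $D$ directly as $N\times_P M_{21}$ and read off the back-face pullbacks by pasting, whereas the paper sets $D:=C_2\times_N C_1$ and then uses pullback--pullback decomposition; your explicit check that $C_1,C_2,D\vDash\ac{c}_{\cN}$ and your remark about the evident typo in the second pullback condition are both well taken.
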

\begin{proof} The statement of the theorem follows from the $\cM$-vanKampen property of the category $\bfC$.\\

{\makeatletter
\let\par\@@par
\par\parshape0
\everypar{}
\begin{wrapfigure}[7]{l}{0.25\linewidth}
\centering
\vspace{-1em}
$\inputtikz{DPEproofA}$
\end{wrapfigure}
\noindent For the $\Rightarrow$ direction, suppose that $P\not\vDash\ac{c}_{\cN}$, which entails that there exists an $N\in \cN$ and an embedding $(N\hookrightarrow P)$. Construct the commutative cube on the left via (1) taking pullbacks in order to obtain objects $C_2$ and $C_2$ and (2) letting $D$ be defined as the pullback of $(C_2\hookrightarrow N\hookleftarrow C_1)$. By stability of $\cM$ morphisms and by their decomposition properties, all arrows in the top square and all vertical arrows are $\cM$-morphisms. Since the bottom square is a pushout along $\cM$-morphisms and thus a pullback, and since the front and top squares are pullbacks, by pullback-pullback decomposition the back square is a pullback, and analogously so is the right square. Thus by the $\cM$-vanKampen property, the top square is a pushout, and we have proved that $(C_2\hookleftarrow D\hookrightarrow C_1)$ is in $\cS_{\cN}$.\par\makeatother}

For the $\Leftarrow$ direction, suppose the bottom pushout square as well as the back and left pullback squares were given (with all involved morphisms in $\cM$), and such that $(C_2\hookleftarrow D\hookrightarrow C_1)\in\cS_{\cN}$. Then letting $N$ be the pushout of $(C_2\hookleftarrow D\hookrightarrow C_1)$ (which by definition of $\cS_{\cN}$ entails that $N\in \cN$), there exists by universal property of the pushout a morphism $(N\hookrightarrow P)$. 
{\makeatletter
\let\par\@@par
\par\parshape0
\everypar{}
\begin{wrapfigure}[8]{l}{0.33\linewidth}
\centering
\vspace{-0.5em}
$\inputtikz{DPEproofB}$
\end{wrapfigure}
\noindent It remains to demonstrate that $(N\rightarrow P)$ is in $\cM$. To this end, let us assemble the auxiliary on the left. By assumption, squares $(1)$ and $(4)$ are pullbacks, square $(3)$ a pushout along $\cM$-morphisms (and thus a pullback), while squares of the form $(2)$ are pullbacks by universal category theory. Consequently, we find by composition of pullbacks that $D$ is the pullback of $(C_2\hookrightarrow P \hookleftarrow C_1)$. Thus finally by virtue of the assumption of $\cM$-effective unions, we can confirm that $(N\hookrightarrow P)$ is in $\cM$.
\par
\makeatother}
\end{proof}

The alternative test for constraint satisfaction via DPEs of spans of $\cS_{\cN}$ according to Theorem~\ref{thm:dpe} permits to formulate the following alternative SqPO-type rule composition strategy:
\begin{definition}[DPE Strategy] Given two linear rules with conditions $R_j\equiv(r_j,\ac{c}_{I_j})\in \LinAc{\bfC}$ ($j=1,2$) and a candidate match $\mu=(I_2\hookleftarrow M_{21}\hookrightarrow O_1)$, the \emph{DPE Strategy} to verify whether $\mu$ is an admissible match is defined as follows:
\begin{enumerate}
    \item Verify that there does not exist any double-pullback embedding of a span of $\cS_{\cN}$ into the span $\mu$ (with DPEs as defined in~\eqref{eq:DPEdiag}).
    \item Verify that $(I_2\hookrightarrow N_{21})\vDash\ac{c}_{I_2}$.
    \item If the pushout complement of $(N_{21}\hookleftarrow O_1,O_1\hookleftarrow K_1)$ exists, perform the SqPO-type composition according to Definition~\ref{def:SqPOcomp}, verifying that the application condition $\ac{c}_{I_{21}}$ of the composite rule satisfies $\ac{c}_{I_{21}}\!\not{\!\equiv}\;\ac{false}$.
\end{enumerate} 
If all steps are successful, $\mu$ is an admissible SqPO-type match of $R_2$ into $R_1$.
\end{definition}

Finally, a useful corollary of Theorem~\ref{thm:dpe} is the following alternative algorithm for minimal constraint-preserving application conditions for SqPO-type rules (for the case as assumed in this section that rules themselves are constructed from objects satisfying the global constraint):
\begin{corollary}\label{cor:macs}
    With notations as above, given a ``plain'' linear rule $r=(O\hookleftarrow K\hookrightarrow I)\in \Lin{\bfC}$, perform the following steps:  for each $(C_2\hookleftarrow D\hookrightarrow C_1)\in\cS_{\cN}$,
    \begin{enumerate}
        \item find all \emph{pullback embeddings} of $(C_2\hookleftarrow D)$ into $(O\hookleftarrow K)$, i.e.\ pairs of $\cM$-morphisms $(C_2\hookrightarrow O)$ and $(D\hookrightarrow K)$ s.th.\ $D=\pB{C_2\hookleftarrow D\hookrightarrow K}$, and
        \item for each pullback embedding, construct $(C_1\hookrightarrow P\hookleftarrow I)$ by taking the pushout of the span $(C_1\hookleftarrow D\hookrightarrow K\hookrightarrow I)$; if $P\vDash\ac{c}_{\cN}$, then this pullback embedding contributes a negative condition of the form $\neg\exists(I\hookrightarrow P,\ac{true})$.
    \end{enumerate}
    The \emph{(minimal) constraint-preserving application condition} $\ac{c}_I$ is given by the conjunction over all individual contributions $\neg\exists(I\hookrightarrow P,\ac{true})$ computed above.  
\end{corollary}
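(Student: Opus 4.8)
The plan is to prove the statement as the following equivalence, which captures both \emph{constraint-preservation} and \emph{minimality}: for every host object $X\in\obj{\bfC}$ with $X\vDash\ac{c}_{\cN}$ and every match $m:I\hookrightarrow X$, the produced object $R_m(X)$ satisfies $\ac{c}_{\cN}$ if and only if $m\vDash\ac{c}_I$, where $\ac{c}_I:=\bigwedge\neg\exists(I\hookrightarrow P,\ac{true})$. The strategy is to reduce constraint violation of $R_m(X)$ to the double-pullback embedding (DPE) criterion of Theorem~\ref{thm:dpe}, and then to trade the only match-dependent piece of data for a left application condition on $m$.

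First I would observe that, by Definition~\ref{def:SqPOr}, the output is the pushout $R_m(X)=\pO{O\hookleftarrow K\hookrightarrow \overline{K}}$ of a monic span, where $\overline{K}$ is the final pullback complement, i.e.\ the preserved context. Applying Theorem~\ref{thm:dpe} to this span yields $R_m(X)\not\vDash\ac{c}_{\cN}$ iff there is a forbidden relation $(C_2\hookleftarrow D\hookrightarrow C_1)\in\cS_{\cN}$ together with monos $C_2\hookrightarrow O$, $D\hookrightarrow K$, $C_1\hookrightarrow \overline{K}$ satisfying the two pullback conditions $D=C_2\times_O K$ and $D=C_1\times_{\overline{K}}K$. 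The pair $(C_2\hookrightarrow O,\,D\hookrightarrow K)$ with $D=C_2\times_O K$ is precisely a pullback embedding of $(C_2\hookleftarrow D)$ into $(O\hookleftarrow K)$ as enumerated in step~(1), and crucially it depends only on the rule, not on $X$ or $m$.

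The heart of the argument is to eliminate the remaining match-dependent datum, namely a mono $C_1\hookrightarrow\overline{K}$ compatible with $D\hookrightarrow K\hookrightarrow\overline{K}$ and satisfying $D=C_1\times_{\overline{K}}K$. I would establish a bijective correspondence between such context embeddings and extensions of $m$ to a mono $P\hookrightarrow X$, where $P=\pO{C_1\hookleftarrow D\hookrightarrow I}$ is exactly the pushout built in step~(2). For the forward direction one composes $C_1\hookrightarrow\overline{K}\hookrightarrow X$ with $m$; the two legs agree on $D$ by commutativity of the FPC square, so the universal property gives a morphism $P\to X$, which $\cM$-effective unions upgrade to a mono. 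The converse is the delicate direction: given a mono $P\hookrightarrow X$ extending $m$, effective unions force the images of $C_1$ and $m(I)$ in $X$ to meet exactly in $m(D)\subseteq m(K)$, so $C_1$ is disjoint from the deleted part $m(I)\setminus m(K)$; I would then invoke the finality of the FPC together with the stability of $\cM$-morphisms under FPCs to factor $C_1\hookrightarrow X$ through $\overline{K}\hookrightarrow X$, with $D=C_1\times_{\overline{K}}K$ over $K$. I expect this factorisation to be the main obstacle, since one must rule out that $C_1$ meets any structure removed by the FPC beyond the explicitly deleted part.

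Finally I would settle the side condition and minimality. The clause ``if $P\vDash\ac{c}_{\cN}$'' in step~(2) is justified because a $P$ with $P\not\vDash\ac{c}_{\cN}$ already embeds a forbidden pattern and hence cannot embed into any $X\vDash\ac{c}_{\cN}$; the associated conjunct could never be violated, so dropping it preserves soundness and is exactly what prevents redundant conjuncts. Assembling the steps, $m$ violates some retained conjunct $\neg\exists(I\hookrightarrow P)$ iff $m$ extends along $P\hookrightarrow X$, iff (by the correspondence) $R_m(X)$ admits a DPE of some forbidden relation, iff $R_m(X)\not\vDash\ac{c}_{\cN}$ — the desired equivalence. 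Since $\ac{c}_I$ is thereby semantically equivalent to the exact constraint-preservation predicate, it is the weakest sound application condition, i.e.\ minimal; and each retained $P$, being itself a host with $P\vDash\ac{c}_{\cN}$ whose canonical match $I\hookrightarrow P$ produces the forbidden pattern $\pO{C_2\hookleftarrow D\hookrightarrow C_1}$ in $R_m(P)$, confirms that no conjunct may be dropped without losing soundness.
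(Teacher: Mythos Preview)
Your proposal is correct and rests on the same categorical ingredients as the paper's proof (the FPC universal property, $\cM$-effective unions, and the pullback property of pushouts along $\cM$-morphisms), but the organisation is genuinely more modular. You invoke Theorem~\ref{thm:dpe} as a black box to characterise $R_m(X)\not\vDash\ac{c}_{\cN}$ via a DPE into the span $(O\hookleftarrow K\hookrightarrow\overline{K})$, and then isolate a clean correspondence between the match-dependent datum $C_1\hookrightarrow\overline{K}$ (with $D=C_1\times_{\overline{K}}K$) and factorisations $I\hookrightarrow P\hookrightarrow X$ of $m$. The paper, by contrast, does not call Theorem~\ref{thm:dpe} but re-runs its cube construction inline over the full SqPO diagram; for the $\Leftarrow$ direction it introduces intermediate objects $\overline{D}:=\pO{C_1\hookleftarrow D\hookrightarrow K}$ and $\overline{C}_2:=\pO{O\hookleftarrow K\hookrightarrow\overline{D}}$, proves an auxiliary pullback-composition lemma, invokes FPC--pushout decomposition, and then reconstructs the embedding $N\hookrightarrow X_1$ explicitly through $\overline{C}_2$. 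Your route avoids these intermediaries by applying the FPC universal property directly to the pullback square $(D,I,X,C_1)$; note, however, that the reason this square is a pullback is that pushouts along $\cM$-morphisms are pullbacks and remain so after postcomposition with the mono $P\hookrightarrow X$, not ``effective unions'' as you write (effective unions is what you use, correctly, in the forward direction to upgrade $P\to X$ to an $\cM$-morphism). What your approach buys is economy and a clear separation of the rule-level data from the match-level data; what the paper's approach buys is self-containment and an explicit audit trail of every pushout and pullback involved, at the price of some redundancy with the proof of Theorem~\ref{thm:dpe}.
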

\begin{proof} For the $\Rightarrow$ direction, let us assume that a given SqPO-type direct derivation along rule $(O\hookleftarrow K\hookrightarrow I)$ with candidate match $(m:I\hookrightarrow X_0)\in \cM$ results in an object $X_1$ with $X_1\not\vDash\ac{c}_{\cN}$. By definition of satisfiability and of $\ac{c}_{\cN}$, this entails that there exists an $N\in\cN$ and an $\cM$-morphism $(N\hookrightarrow X_1)\in\cM$. In complete analogy to the proof of Theorem~\ref{thm:dpe} as given in the previous section, construct the commutative cube below over the SqPO-type direct derivation diagram:
\begin{equation}
\gdef\mycdScale{0.7}
\inputtikz{CorMACSproofA}
\end{equation}
Here, $C_1$ and $C_2$ are constructed by taking pullbacks, which by stability of $\cM$-morphisms under pullbacks entails that the newly constructed morphism are also in $\cM$. $D$ is constructed as a pullback of $(C_2\hookrightarrow N\hookleftarrow C_1)$ (with $\cM$-morphisms in the resulting span), while the morphism $(D\rightarrow K)$ is an $\cM$-morphism by the decomposition property of $\cM$-morphisms. Since the bottom left square is a pushout along $\cM$-morphisms and thus a pullback, we find via pullback-pullback decomposition that also the squares $\cSq{D,C_2,O,K}$ (back left) and $\cSq{D,C_1,\overline{X}_0,K}$ (middle vertical) are pullbacks, and thus by the $\cM$-VK property the top left square is a pushout. Next, construct the following three pushouts:
\[
	{\color{h1color}\overline{D}}:=\pO{C_1\hookleftarrow D\hookrightarrow K}\,,\;
	{\color{h1color}\overline{C}_2}:=\pO{O\hookleftarrow K{\color{h1color}\hookrightarrow \overline{D}}}\,,\;
	{\color{h1color}P}:=\pO{{\color{h1color}\overline{D}\hookleftarrow} K\hookrightarrow I}\,.
\]
As depicted in the diagram below left, by the universal properties of the relevant pushouts and via $\cM$-effective unions, there exist morphisms (drawn below with dotted lines) that are in fact $\cM$-morphisms:
\begin{equation}\label{eq:appCProofD}
\inputtikz{CorMACSproofB}
\qquad\qquad 
\inputtikz{CorMACSproofC}
\end{equation}
Finally, there are two cases to consider: if $P\vDash\ac{c}_{\cN}$, we have exhibited a $\cM$-morphism $(I\hookrightarrow P)$ through which $(I\hookrightarrow X_0)$ factors, and which thus by the above construction proves that the rewrite will lead to an $X_1$ with at least one embedding of a ``forbidden pattern'' $N\in \cN$. If on the other hand $P\not\vDash\ac{c}_{\cN}$, we have proved that $X_0\not\vDash \ac{c}_{\cN}$; consequently, the $\cM$-morphism $(I\hookrightarrow P)$ would in this case not contribute to a constraint-preserving application condition for $(O\hookleftarrow K\hookrightarrow I)$.

For the $\Leftarrow$ direction, let us assume we were given the data of the diagram below (i.e.\ a SqPO-type direct derivation of $X_0$ along candidate match $(I\hookrightarrow X_0)\in \cM$ as well as a pattern $P$ such that $(I\hookrightarrow X_0)$ factors through some $(I\hookrightarrow P)$ constructed according to the statement of the Corollary) as depicted in the diagram on the right of~\eqref{eq:appCProofD}. %
Here, in order to demonstrate the existence of the morphism ${\color{h1color}(\overline{D}\rightarrow \overline{X}_0)}$, let us first introduce a useful auxiliary formula: given a commutative diagram of $\cM$-morphisms as below left,
\begin{equation}
\inputtikz{CorMACSproofD}\rightsquigarrow 
\inputtikz{CorMACSproofD2}
\end{equation}
this data yields the diagram above right. Since the right square is a pullback and the left square a pushout along $\cM$-morphisms and thus also a pullback, we find by composition of pullbacks that $\cSq{A,C,E,B}$ is a pullback.

Back to the diagram on the right of~\eqref{eq:appCProofD}, since $\cSq{K,\overline{X}_0,X_0,I}$ is by assumption a final pullback complement and $\cSq{K,{\color{h1color}\overline{D}},X_0,I}$ is a pullback by virtue of the auxiliary formula, the universal property of FPCs entails the existence of $({\color{h1color}\overline{D}\rightarrow}\overline{X}_0)$, which by the decomposition property of $\cM$-morphisms is an $\cM$-morphism. Moreover, by FPC-pushout decomposition~\cite{behrRaSiR}, $\cSq{{\color{h1color}\overline{D}},\overline{X}_0,X_0,{\color{h1color}P}}$ is an FPC.

It then suffices to construct the pushout ${\color{h1color}\overline{C}_2}:=\pO{O\hookleftarrow K{\color{h1color}\hookrightarrow \overline{D}}}$, which through the universal properties of the various pushouts involved yields the existence of morphisms $(N{\color{h1color}\rightarrow \overline{C}})$ and $({\color{h1color}\overline{C}\rightarrow}X_1)$ (resulting in a diagram of the form in~\eqref{eq:appCProofD}. Utilising pushout-pushout decompositions (yielding that $\cSq{C_2,{\color{red}N},{\color{h1color}\overline{C}_2},O}$ and $\cSq{{\color{h1color}\overline{D}},{\color{h1color}\overline{C}_2},X_1,\overline{X}_0}$ are pushouts), stability of $\cM$-morphisms under pushouts, the above auxiliary formula (i.e.\ to demonstrate that $\cSq{C_2,{\color{red}N},X_1,O}$ and $\cSq{D,C_1,X_0,K}$ are pullbacks) and the property of $\cM$-effective unions, we find that $({\color{red}N}\rightarrow{\color{h1color}\overline{C}_2})\in \cM$ and $({\color{h1color}\overline{C}_2}\rightarrow X_1)\in \cM$, such that we have in summary exhibited a $\cM$-morphism $({\color{red}N}\hookrightarrow X_1)\in \cM$, which verifies that $X_1\not\vDash\ac{c}_{\cN}$.
\end{proof}

A heuristic solution for the problem of computing minimal constraint-preserving application conditions has been proposed in~\cite{NassarKAT19}. For the case of forbidden graph patterns $\cN$ and negative conditions $\ac{c}_{\cN}$, our result permits to avoid the use of the computationally expensive $\Shift$ and $\Trans$ constructions and subsequent minimisation \cite{habel2009correctness,ehrig2014mathcal}, yielding an equivalent solution for minimal $\ac{c}_{\cN}$-preserving constraints via directly taking advantage of the ``forbidden relations'' in $\cS_{\cN}$.

\section{Implementation and Experimental Evaluation}\label{sec:Z3}

We provide a Python package called  \texttt{ReSMT} (author: N. Behr) as an open-source project via GitLab\footnote{Official \texttt{ReSMT} repository: \url{https://gitlab.com/nicolasbehr/ReSMT}}, whose current version \texttt{0.0.3} contains both a full online documentation\footnote{Documentation: \url{https://nicolasbehr.gitlab.io/ReSMT/}} as well as dedicated \emph{GCM~2020 Supplementary Information} materials\footnote{GCM~2020 SI materials: \url{https://nicolasbehr.gitlab.io/ReSMT/py_and_ipynb_examples/GCM2020.html}}. The computational core of our implementation is a theory (a set of definitions and constraints) in the Z3 theorem prover\footnote{Z3 solver GitHub repository: \url{https://github.com/Z3Prover/z3}}~\cite{deMoura2008}, which due to its declarative nature allows for a direct encoding of our categorical constructions. Using Z3 through its Python API, we can reflect the structure of the theory in an object-oriented design, where classes represent concepts of the problem domain, such as graphs and graph morphisms, spans, rules, pushouts, etc., and generate assertions in Z3 encoding the defining properties of these categorical structures. However, a major challenge in implementing high-performance algorithms consists in evaluating and choosing between a number of alternative encodings of sets and set operations provided by the Z3 API. We first review these challenges in Section~\ref{sec:fs}, followed by a description of the key design choices taken for \texttt{ReSMT} in Section~\ref{sec:ros} and a presentation and evaluation of some experimental results in Section~\ref{sec:ee}. Section~\ref{sec:correct} comments on the correctness of the approach.

\subsection{Key Challenge: Implementation of Finite Set Constructs}\label{sec:fs}

Albeit finite sets are evidently a well-understood and fundamental theoretical concept, choosing an algorithmically advantageous representation of finite sets within Z3  is a non-trivial matter. As described in detail in the excellent Z3 tutorial~\cite{Bjrner2019}, the performance of Z3 when computing on sets of assertions  encoding a search over set structures hinges on the precise way in which the finiteness of the sets involved is communicated to the solver. In our prototypical early 2020 implementation, we had opted for what appeared to be a natural choice: a finite set such as $S=\{x_1,x_2,x_3\}$ may be encoded by declaring an abstract Z3 sort, instantiating Z3 constants of this sort for each element and asserting that any constant of these sorts must be equal to one of the explicitly instantiated constants:
\begin{minted}{python}
import z3				 # load the Z3 API package
S = z3.Solver()			   # instantiate a Z3 solver
X = z3.DeclareSort('X', S.ctx)	    # declare a sort in the context of S
x1,  x2,  x3 = z3.Consts('x1 x2 x3', X)   # instantiate one constant of sort X per set element
x = z3.Const('x', X)		      # instantiate an auxiliary constant
S.add(z3.ForAll(x, z3.Or([x == x1, x == x2, x == x3]))) # add the finiteness assertion to S
\end{minted}
Unfortunately, while of course logically correct, we found that this particular design choice results in poor performance of our algorithms for curating rule overlaps, both in execution time and memory consumption. As a remedy, our \texttt{ReSMT} package is based upon an alternative encoding of finite sets utilising the \emph{enumeration sorts}\footnote{See e.g. this tutorial: \url{https://ericpony.github.io/z3py-tutorial/advanced-examples.htm}} provided by Z3. For the aforementioned example, this alternative encoding reads as follows:
\begin{minted}[firstnumber=7]{python}
set, els = z3.EnumSort('set', ('x1', 'x2', 'x3'), S.ctx)
\end{minted}
The above code yields an instantiation of a Z3 enumeration sort \pythoninline{set} called \pythoninline{'set'} as well as a list of Z3 constants \pythoninline{els} (containing three constants of sort \pythoninline{set} with names \pythoninline{'x1'}, \pythoninline{'x2'} and \pythoninline{'x3'}). After the above call, the enumeration sort \pythoninline{set} as well as the individual constants may be utilised in a fully analogous fashion to other types of Z3 sorts and constants, yet internally provide additional information to the Z3 solver instance, encoding the finiteness axiom. We refer the  reader to a dedicated section\footnote{API experiments: \url{https://nicolasbehr.gitlab.io/ReSMT/py_and_ipynb_examples/ReSMT-API-experiments.html}} of the \texttt{ReSMT} online documentation for in-detail performance experiments and simplified versions of the  \texttt{ReSMT} algorithms for finding rule overlaps, demonstrating in particular the feasibility of using enumeration sorts for curating injective partial set overlaps with or without additional cardinality constraints.

\subsection{Implementation Strategies for Curating Rule Overlaps}\label{sec:ros}

Our current \texttt{ReSMT} implementation (version \texttt{0.0.3}) consists of approximately 1800 lines of \texttt{Python~3} source code and is provided with an in-depth API documentation\footnote{\texttt{ReSMT} API documentation: \url{https://nicolasbehr.gitlab.io/ReSMT/API-reference/Reference.html}}. The core task implemented in the form of three alternative variants of algorithms is the curation of rule overlaps modulo structural constraints, currently supported already for the general data structure of \emph{typed directed multigraphs}. We will present here some of the key aspects of our implementation in the setting of untyped directed multigraphs $G = (V_G, E_G, s_G:E_G\to V_G, t_G:E_G\to V_G)$ of relevance to the present paper. Assuming that the sets of vertices and edges of a given graph $G$ have been encoded as enumeration sorts \pythoninline{vG} and \pythoninline{eG} (cf. Section~\ref{sec:fs}), the source and target functions of $G$ are instantiated as Z3 functions:
\begin{minted}[firstnumber=8]{python}
srcG = z3.Function('srcG', eG, vG)
trgtG = z3.Function('trgtG', eG, vG)
\end{minted}
\emph{(Injective) graph homomorphisms} $\varphi\equiv(\varphi_V,\varphi_E):G_A\to G_B$  may then be encoded as follows:
\begin{minted}[firstnumber=10]{python}
phiV = z3.Function('phiV', vGA, vGB); phiE = z3.Function('phiE', eGA, eGB)
v1, v2 = z3.Consts('v1 v2', vGA); e, e1, e2 = z3.Consts('e e1 e2', eGA)
S.add(z3.ForAll(eA, z3.And(srcB(phiE(e)) == phiV(srcA(e)), trgtB(phiE(e)) == phiV(trgtA(e))))
S.add(z3.ForAll([v1,v2], z3.Implies(phiV(v1) == phiV(v2), v1 == v2))) # phiV injectivity
S.add(z3.ForAll([e1,e2], z3.Implies(phiE(e1) == phiE(e2), e1 == e2))) # phiE injectivity
\end{minted}
Finally, \emph{injective partial overlaps} of finite multigraphs may be efficiently implemented as \emph{bi-injective Boolean span predicates} $\Phi: G_A\times G_B\to \{\top, \bot\}$, which is particularly advantageous compared to a more direct encoding of monic spans of graphs in our exhaustive overlap-finding algorithms. The span predicate must satisfy the following logical assertions (i.e. bi-injectivity and a variant of a graph homomorphism property):
\begin{minted}[firstnumber=15]{python}
PhiV = z3.Function(vGA, vGB, z3.BoolSort); PhiE = z3.Function(eGA, eGB, z3.BoolSort)
vA, vA1, vA2 = z3.Consts('vA vA1 vA2', vGA); eA, eA1, eA2 = z3.Consts('eA eA1 eA2', eGA);
vB, vB1, vB2 = z3.Consts('vB vB1 vB2', vGB); eB, eB1, eB2 = z3.Consts('eB eB1 eB2', eGB);
astsBiInj = [z3.ForAll([vA1, vA2, vB], 
		       z3.Implies(z3.And(PhiV(vA1,vB)==True, PhiV(vA2,vB)==True), vA1==vA2)),
	     z3.ForAll([eA1, eA2, eB], 
		       z3.Implies(z3.And(PhiE(eA1,eB)==True, PhiE(eA2, eB)==True), eA1==eA2)),
	     z3.ForAll([vA, vB1, vB2], 
	               z3.Implies(z3.And(PhiV(vA,vB1)==True, PhiV(vA, vB2)==True), vB1==vB2)),
             z3.ForAll([eA,eB1,eB2], 
                       z3.Implies(z3.And(PhiE(eA,eB1)==True, PhiE(eA, eB2)==True), eB1==eB2))]
astsPhiHom = [z3.ForAll([eA, eB], 
		       z3.Implies(PhiE(eA,eB)==True, 
		                  z3.And(PhiV(srcA(eA),srcB(eB))==True, 
		               	  PhiV(trgtA(eA),trgtB(eB))==True)))
S.add(astsBiInj + astsPhiHom)
\end{minted}
In order to encode the necessary constructs for our three different strategies, we require a number of additional data structures and logical assertion generation methods, all of which are documented in full detail both in the \texttt{ReSMT} API reference as well as in several sets of tutorial examples on the documentation page of the package. For brevity, suffice it here to highlight the key conceptual difference between the \emph{direct} aka  ``generate-and-test'' strategy \ \pythoninline{resmt.datatypes.generateTDGoverlapsDirectstrategy} and its \emph{implicit} strategy variant %
\pythoninline{resmt.datatypes.generateTDGoverlapsImplicitstrategy}. The \emph{direct} strategy consists of a phase wherein all possible injective partial overlaps are determined via searching with Z3 for all possible monic Boolean span predicates as described above, followed by a second phase of forming a pushout object for each individual overlap and testing the pushouts for compliance with the structural constraints. In the \emph{implicit} strategy, both tasks are combined into a single set of logical assertions by instantiating \emph{ordinary sorts} (i.e.\ \emph{not} enumeration sorts) for the vertex and edge sets of the tentative pushout graphs $G_P$, and in addition asserting the existence of injective graph homomorphisms $\alpha:G_A\hookrightarrow G_P$ and $\beta:G_B\hookrightarrow G_P$ as well as the properties that $\alpha$ and $\beta$ should be \emph{jointly surjective} as well es compatible with the Boolean span predicate $\Phi$ (in the sense that $\Phi(a,b) = \top\Rightarrow \alpha(a)=\beta(b)$). What at first looks like a rather baroque tautology for the structure of the equivalent \emph{direct} strategy definition has in fact dramatic consequences for the performance of the overlap-finding algorithm, since the implicit encoding of the pushout object permits the simultaneous formulation of the forbidden pattern non-embedding constraints, such that in the \emph{implicit} strategy no non-consistent overlaps are ever constructed.

\subsection{Experiments and Evaluation} \label{sec:ee}

{
\begin{table}
\label{tab:experiments}
\caption{Experimental results of the three different strategies for the examples {\bf P1} to {\bf P4}. For each experiment, the \emph{average total run-time} (over five complete runs) as well as the \emph{maximum RAM consumption} are provided. Here, a ``complete run'' consists in finding \emph{all} admissible overlaps. Note that for the case of the DPE strategy and example {\bf P3}, we report here on exception only on the result of a single run (due to the prohibitive run-time). All experiments were performed via our \texttt{ReSMT} Python package (version \texttt{0.0.3}) executed in \texttt{Python 3.8.2} and with \texttt{Z3-4.8.8.0} on a computer running \texttt{macOS Catalina~10.15.6} with a \emph{2.30GHz Intel(R) Core(TM) i7-3615QM CPU} and \emph{16 GB} of RAM. The entry {\color{red}n/a} signals experiments that were not possible to complete under the experimental conditions in an order of hours per run. Note that for experiment {\bf P4}, the number of candidate overlaps could only be estimated (cf. main text).}

\arrayrulecolor{black}
\footnotesize
\begin{tabular}{c*{4}{r@{\hskip 6pt}r}}
\toprule
{\bf Experiment name} & 
\multicolumn{2}{c}{{\bf P1}} & \multicolumn{2}{c}{{\bf P2}} & 
\multicolumn{2}{c}{{\bf P3}} & \multicolumn{2}{c}{{\bf P4}}\\ 
$(G_A, G_B)$ & 
\multicolumn{2}{c}{$(I_{\text{d-e}}, O_{\text{c-e}})$} & 
\multicolumn{2}{c}{$(I_{\text{b-c}_2}, O_{\text{c-c}_2})$} & 
\multicolumn{2}{c}{$(I_{\text{b-c}_4}, O_{\text{c-c}_4})$} & 
\multicolumn{2}{c}{$(I_{\text{b-c}_7}, O_{\text{c-c}_7})$} \\[0.5em]
\toprule
\# candidate overlaps & 
\multicolumn{2}{c}{8} & \multicolumn{2}{c}{49} & 
\multicolumn{2}{c}{2426} & 
\multicolumn{2}{c}{{\color{red}$1.17\cdot10^9\dotsc3.02\cdot10^{22}$}}\\
\# correct overlaps &
\multicolumn{2}{c}{5} & \multicolumn{2}{c}{4} & 
\multicolumn{2}{c}{6} & \multicolumn{2}{c}{9}\\
\toprule
\emph{direct} strategy & 
\SI{0.388}{\second} & (\SI{37.6}{\mega\byte}) & 
\SI{2.420}{\second} & (\SI{146.5}{\mega\byte}) & 
\SI{134.642}{\second} & (\SI{6466.2}{\mega\byte}) &
{\color{red}n/a} & ({\color{red}n/a})\\
\midrule
\emph{DPE} strategy & 
\SI{0.083}{\second} & (\SI{18.8}{\mega\byte}) & 
\SI{20.790}{\second} & (\SI{269.8}{\mega\byte}) & 
\SI{8205.913}{\second} & (\SI{5022.1}{\mega\byte}) &
{\color{red}n/a} & ({\color{red}n/a})\\
\midrule
\emph{implicit} strategy & 
\SI{0.093}{\second} & (\SI{16.8}{\mega\byte}) & 
\SI{0.242}{\second} & (\SI{19.5}{\mega\byte}) & 
\SI{1.079}{\second} & (\SI{30.1}{\mega\byte}) &
\SI{5.390}{\second} & (\SI{57.8}{\mega\byte})
\end{tabular}
\normalsize
\end{table}
}

We present in Table~\ref{tab:experiments} the experimental results for the computation of the full set of admissible overlaps of rigid graphs for four different pairs of input graphs $(G_A,G_B)$ (chosen from the in- and output interfaces of the rules of Example~\ref{ex:polymer}), and for the three different strategies. We invite the readers to consult the \emph{GCM 2020 Supplementary Information} section of the \texttt{ReSMT} online documentation for the precise specification of each of the computational experiments. In particular, we provide precise instructions on how to replicate these experiments by executing \texttt{Jupyter} notebook \texttt{GCM2020.ipynb} provided as part of the \texttt{ReSMT} GitLab repository. The examples {\bf P1} to {\bf P4} were chosen to illustrate the fundamental effect that graph constraints may have on the computational complexity determining injective partial graph overlaps, and with the data type of rigid graphs as introduced in Example~\ref{ex:rGraph} providing a prototypical example of a graph-like structure defined via negative constraints. It may be easily verified that the pushout of an injective partial overlap of a ``chain'' of edges $\pi_n$ of length $n$ and a ``loop'' of edges $\lambda_{n+1}$ (with both notations as in Example~\ref{ex:rGraph}) can only be a rigid graph if either the overlap was empty, or if the ``chain'' fully embeds into the ``loop'' (in one of $n+1$ possible ways). This precisely explains the number of ``correct'' overlaps for examples {\bf P2} ($n=2$), {\bf P3} ($n=4$) and {\bf P4} ($n=7$), with a similar argument resulting in $5$ overlaps for example {\bf P1}. However, for the \emph{direct} strategy also the numbers of all injective partial overlaps \emph{without} taking into account the constraints is highly relevant, which as indicated in Table~\ref{tab:experiments} exhibits an extreme growth with growing size of the graphs. Notably, while for examples {\bf P1}, {\bf P2} and {\bf P3} it was possible to explicitly use our code for the case without constraints to experimentally find the precise numbers of constraint-less overlaps, in example {\bf P4} this number could only be very roughly estimated (with a lower and upper bound provided by the number of vertex-only overlaps and the number of overlaps of the joint sets of vertices and edges without taking into account the graph homomorphism structure, respectively). 

Inspecting the experimental results of Table~\ref{tab:experiments}, it becomes immediately evident from a comparison of run-times and maximal memory usages that, while the \emph{DPE} strategy poses a theoretical advantage over the \emph{direct} strategy in that it avoids computing the many overlap candidates inconsistent with the constraints (and for small examples such as {\bf P1} indeed shows slightly better performance), this theoretical advantage is in practice effectively voided by prohibitive memory consumption and a run-time of the order of several hours already for the relatively small example {\bf P3}. Both strategies are incapable of computing all admissible overlaps for example {\bf P4}, where the reason in the case of the \emph{direct} strategy is clearly the prohibitive number of candidate overlaps. On the other hand, the \emph{implicit} strategy appears to provide by far the most performant and robust implementation, with run-times and maximal memory consumptions for this strategy scaling rather favourably with the problem size. We thus posit that it is this type of strategy that will be capable of efficiently tackling the problem of computing rule compositions also in the highly relevant application fields of bio- and organo-chemistry, where one encounters graph constraints for the specification of the respective data types akin to the prototypical model of rigid graphs, and which we have thus identified as a viable avenue for future work and software developments.

\subsection{Correctness} \label{sec:correct}
We discuss the correctness of the strategies and their implementations. The correctness of the \emph{direct} and \emph{implicit} strategies is obvious from their definitions because they both encode the original statement of the problem of finding all spans between two given graphs whose pushout object does not contain any of the forbidden patters. The correctness of the \emph{DPE} strategy is established by Theorem~\ref{thm:dpe}.
The implementation of these strategies has not been formally verified, but their encoding by logical constraints reflects directly the definition of the categorical constructions involved. 

The implementation has been tested by a number of small examples where it is easy to calculate the expected results by hand. This provides validation that the basic constructions are implemented correctly. Where all three strategies have produced results we have compared them to establish that they are indeed functionally equivalent. 
In particular, when applied to the same example, they produced the same number of spans. 

\section{Related Work}

In~\cite{DBLP:journals/corr/abs-1912-09607} two of the authors contributed to a review of applications of SAT and SMT solvers in graph transformation. In summary, their use in analysing graph transformation systems has increased over the last years, e.g. for the computation of strongest post-conditions~\cite{DBLP:conf/gg/Corradini0N17,Corradini2019specifying}, the implementation of graph computation problems and algorithms~\cite{Kreowski2010,ermler2011graph}, model checking~\cite{IsenbergSW13}, the verification of graph properties (shapes) as invariants~\cite{Steenken2015, Steenken2011}, and termination analysis~\cite{Grez2015}. While these approaches share our general approach, they address very different analysis problems. i.e. none of the above compute rule overlaps. 

Apart from other references already discussed, there has been work on the derivation of differential equations from rewrite rules in the context of or inspired by \texttt{Kappa}~\cite{DanosFFHK08,DBLP:journals/eceasst/BapodraH10,danos2012graphs}. We share their motivations and aims in enabling a wealth of mathematical analysis for graph transformation systems. Such approaches are generalised by the rule algebra construction~\cite{bdg2016,bp2018,bdg2019,nbSqPO2019,bk2020a,bp2020}, which provides the context for our work. %
The computation of relations between rules subject to constraints is also an element of critical pair analysis~\cite{HeckelKT02}. Recent work in this area~\cite{DBLP:conf/icse/Lambers0TBH18} has focused on the level of granularity at which such relations are computed and presented. For example, in many applications it is not necessary to compute all relations, but instead to focus only on \emph{essential} ones with minimal conflicting overlap. 
When computing relations to derive commutators, the only meaningful abstraction is by isomorphism classes, but it may be interesting to explore the use of our implementation in critical pair and dependency analysis where often only selected relations are required.

\section{Conclusion}

We have presented results of both theoretical and experimental nature on the efficient construction of overlap spans with constraints for the computation of commutators for stochastic rewrite systems. On the theoretical side, we showed how for compositions of SqPO rules with application conditions, the constraint checking step in this construction can be modularised, avoiding the direct generate-and-test strategy (of first composing rules and then validating them) by replacing checks for forbidden graph patterns by checks for forbidden relation patterns, and how the same technique can be used to efficiently derive minimal negative conditions preserving forbidden graph constraints. On the experimental side, we have designed a prototypical implementation of our three alternative strategies in Z3 based on the categorical concepts and constructions of the theory. We evaluated the correctness of our algorithms by testing them against each other and compared their performance through a series of experiments, which permitted us to identify the \emph{implicit} strategy as the most robust and best performing one, which in contrast to the other two strategies scaled very well with increasing problem complexity.

We plan to implement the complete commutator computation, including constructing rules with conditions and counting their isomorphism classes, and to explore applications in rule-based models of biochemistry and adaptive networks to derive their evolution equations.



\end{document}